\newcommand{\cC}{{\cal C}}
\newcommand{\cH}{{\cal H}}
\newcommand{\sG}{\script{G}}
\newcommand{\sP}{\script{P}}
\newcommand{\bfa}{{\boldsymbol a}}
\newcommand{\bfb}{{\boldsymbol b}}
\newcommand{\bfc}{{\boldsymbol c}}
\newcommand{\bfe}{{\boldsymbol e}}
\newcommand{\bfh}{{\boldsymbol h}}
\newcommand{\bfs}{{\boldsymbol s}}
\newcommand{\bfu}{{\boldsymbol u}}
\newcommand{\bfv}{{\boldsymbol v}}
\newcommand{\bfx}{{\boldsymbol x}}
\newcommand{\bfy}{{\boldsymbol y}}
\newcommand{\bfz}{{\boldsymbol z}}
\newcommand{\bfX}{{\mathbf X}}
\newcommand{\sbinom}[2]{\left[ \begin{array}{c} #1 \\ #2 \end{array} \right] }
\DeclareMathAlphabet{\mathbfsl}{OT1}{cmr}{bx}{it}
\newcommand{\uuu}{\kern-1pt\mathbfsl{u}\kern-0.5pt}
\newcommand{\vvv}{\kern-1pt\mathbfsl{v}\kern-0.5pt}
\newcommand{\myboxplus}{\kern1pt\mbox{\small$\boxplus$}}
\makeatletter \DeclareRobustCommand{\sbinom}{\genfrac[]\z@{}}
\newcommand{\G}[2]{\sbinom{{#1}\kern-1pt}{{#2}\kern-1pt}}
\newcommand{\Gq}[2]{\sbinom{{#1}\kern-0.25pt}{{#2}\kern-0.25pt}}
\newcommand{\Ps}{\smash{{\sP\kern-2.0pt}_q\kern-0.5pt(n)}}
\newcommand{\sPs}{\smash{{\sP\kern-1.5pt}_q(n)}}
\newcommand{\Ptwo}{\smash{{\sP\kern-2.0pt}_2\kern-0.5pt(n)}}
\newcommand{\Ptwom}{\smash{{\sP\kern-2.0pt}_2\kern-0.5pt(m)}}
\newcommand{\Ptwonm}{\smash{{\sP\kern-2.0pt}_2\kern-0.5pt(n+m)}}
\newcommand{\Ptwoa}{\smash{{\sP\kern-2.0pt}_2\kern-0.5pt(1)}}
\newcommand{\Ptwob}{\smash{{\sP\kern-2.0pt}_2\kern-0.5pt(2)}}
\newcommand{\Ptwoc}{\smash{{\sP\kern-2.0pt}_2\kern-0.5pt(3)}}
\newcommand{\Ptwod}{\smash{{\sP\kern-2.0pt}_2\kern-0.5pt(4)}}
\newcommand{\Ptwoe}{\smash{{\sP\kern-2.0pt}_2\kern-0.5pt(5)}}
\newcommand{\Ptwof}{\smash{{\sP\kern-2.0pt}_2\kern-0.5pt(6)}}
\newcommand{\Ptwokm}{\smash{{\sP\kern-2.0pt}_2\kern-0.5pt(2k-1)}}
\newcommand{\Pone}{\smash{{\sP\kern-2.5pt}_2\kern-0.5pt(n{-}1)}}
\newcommand{\Gr}{\smash{{\sG\kern-1.5pt}_q\kern-0.5pt(n,k)}}
\newcommand{\Gi}{\smash{{\sG\kern-1.5pt}_q\kern-0.5pt(n,i)}}
\newcommand{\Gj}{\smash{{\sG\kern-1.5pt}_q\kern-0.5pt(n,j)}}
\newcommand{\Grmk}{\smash{{\sG\kern-1.5pt}_q\kern-0.5pt(n,n-k)}}
\newcommand{\Grdk}{\smash{{\sG\kern-1.5pt}_q\kern-0.5pt(2k,k)}}
\newcommand{\Grekappa}{\smash{{\sG\kern-1.5pt}_q\kern-0.5pt(n,e+1-\kappa)}}
\newcommand{\Grtwoekappa}{\smash{{\sG\kern-1.5pt}_q\kern-0.5pt(n,2e+1-\kappa)}}
\newcommand{\Gremkappa}{\smash{{\sG\kern-1.5pt}_q\kern-0.5pt(n,e-\kappa)}}
\newcommand{\Gn}{\smash{{\sG\kern-1.5pt}_2\kern-0.5pt(n,n{-}1)}}
\newcommand{\Gnq}{\smash{{\sG\kern-1.5pt}_q\kern-0.5pt(n,n{-}1)}}
\newcommand{\Gone}{\smash{{\sG\kern-1.5pt}_2\kern-0.5pt(n,1)}}
\newcommand{\Gqone}{\smash{{\sG\kern-1.5pt}_q\kern-0.5pt(n,1)}}
\newcommand{\GTwo}{\smash{{\sG\kern-1.5pt}_2\kern-0.5pt(n,k)}}
\newcommand{\GTwonk}[2]{{\smash{{\sG\kern-1.5pt}_2\kern-0.5pt({#1},{#2})}}}
\newcommand{\Gnk}{\smash{{\sG\kern-1.5pt}_2\kern-0.5pt(n,n{-}k)}}
\newcommand{\Greone}{\smash{{\sG\kern-1.5pt}_q\kern-0.5pt(n,e{+}1)}}
\newcommand{\Gretwo}{\smash{{\sG\kern-1.5pt}_q\kern-0.5pt(n,e{+}2)}}
\newcommand{\be}[1]{\begin{equation}\label{#1}}
\newcommand{\ee}{\end{equation}}
\newcommand{\Cref}[1]{Co\-rol\-la\-ry\,\ref{#1}}
\newtheorem{theorem}{Theorem}
\newtheorem{lemma}[theorem]{Lemma}
\newtheorem{proposition}[theorem]{Proposition}
\newtheorem{example}[theorem]{Example}
\newtheorem{problem}{Problem}
\newtheorem{problem*}{Problem}
\begin{document}

	\title{\textbf{On the Access Complexity of PIR Schemes}
		\hspace{-0.5ex}
		}
		\author{\IEEEauthorblockN{\textbf{Yiwei Zhang}}
		\IEEEauthorblockA{Dept. of Computer Science\\
			Technion\\
			Haifa 3200003, Israel \\
			\emph{ywzhang@cs.technion.ac.il}\vspace{-2.5ex}}
		\and
				\IEEEauthorblockN{\textbf{Eitan Yaakobi}}
		\IEEEauthorblockA{Dept. of Computer Science\\
			Technion\\
			Haifa 3200003, Israel \\
			 \emph{yaakobi@cs.technion.ac.il}\vspace{-2.5ex}}
		\and
						\IEEEauthorblockN{\textbf{Tuvi Etzion}}
		\IEEEauthorblockA{Dept. of Computer Science\\
			Technion\\
			Haifa 3200003, Israel \\
			 \emph{etzion@cs.technion.ac.il}\vspace{-2.5ex}}
		\and
	                    \IEEEauthorblockN{\textbf{Moshe Schwartz}}
		\IEEEauthorblockA{Department of ECE\\
			 Ben-Gurion Univ. of the Negev\\
			 Beer Sheva 8410501, Israel \\
			 \emph{schwartz@ee.bgu.ac.il}\vspace{-2.5ex}}}

\maketitle
\begin{abstract}
Private information retrieval has been reformulated in an information-theoretic perspective in recent years.
The two most important parameters considered for a PIR scheme in a distributed storage system are the storage
overhead and PIR rate. The complexity of the computations done by the servers for the various tasks of the distributed storage system
is an important parameter in such systems which didn't get enough attention in PIR schemes.
As a consequence, we take into consideration a third parameter, the \emph{access complexity} of a PIR scheme,
which characterizes the total amount of data to be accessed by the servers for responding to the queries
throughout a PIR scheme. We use a general covering codes approach as the main tool for improving
the access complexity. With a given amount of storage overhead, the ultimate objective is to characterize
the tradeoff between the rate and access complexity of a PIR scheme. This covering codes approach raises a new
interesting coding problem of generalized coverings similarly to the well-known generalized Hamming weights.
\end{abstract}

\section{Introduction}\label{sec:intro}
\renewcommand{\baselinestretch}{0.98}\normalsize\noindent

{\it Private information retrieval (PIR)} protocols, first introduced by Chor, Goldreich, Kushilevitz,
and Sudan in~\cite{CKGS98}, allow a user to retrieve a data item from a database without
revealing any information about the identity of the item to any single server. The original formulation of the PIR problem considers replicating a binary string on several non-communicating servers. The objective is to optimize the communication cost, including both the upload cost and the download cost, for privately retrieving one single bit. In recent years, the information-theoretic reformulation of the PIR problem assumes the more practical scenario in which the files are of arbitrarily large size. Under this setup, the number of uploaded bits can be neglected with respect to the corresponding number of downloaded bits since the upload does not depend on the size of the file~\cite{CHY15}. This reformulation introduces the \emph{rate} of a PIR scheme to be the ratio between the size of the retrieved file and the total number of downloaded bits from all servers. The supremum of achievable rates over all PIR schemes is defined as the \emph{PIR capacity}. In their pioneering work~\cite{SJ17B} Sun and Jafar determine the exact PIR capacity of the classical PIR model of replication.

Starting from~\cite{SRR14}, the research of PIR has been combined with distributed storage system instead of the replication-based system.
This brings in the other important parameter, i.e., the {\it storage overhead} of the distributed storage system, defined as the ratio between the total number of bits stored on all the servers and the number of bits of the database. Several papers have been studying the relation between the storage overhead and the rate of a PIR scheme.
Chan et al.~\cite{CHY15} offer a tradeoff between the storage overhead and rate for linear PIR schemes. They show that when each server stores a fraction $0<\epsilon\le 1$ of the database, then the rate of a linear PIR scheme should be at most $\frac{N-1/\epsilon}{N}$, where $N$ is the number of server. Tajeddine et al.~\cite{TGE17} propose a PIR scheme achieving this upper bound when the storage code is an arbitrary $(N,K)$-MDS code, so $\epsilon=\frac{1}{K}$ and the PIR rate is $\frac{N-K}{N}$. Banawan and Ulukus~\cite{BU16} show that the exact PIR capacity when using an arbitrary $(N,K)$-MDS storage code is $(1+\frac{K}{N}+\cdots+\frac{K^{M-1}}{N^{M-1}})^{-1}$, a value dependent on the number of files $M$ and tends to $\frac{N-K}{N}$ when $M$ approaches infinity. However, similar to the scheme of Sun and Jafar~\cite{SJ17B}, this optimal scheme can be implemented only if the file size $L$ is an exponential function of $M$~\cite{SJ16C,XZ17}. For a more practical setting we are more interested in the case when  $L$ is at most a polynomial value in terms of $M$ and the scheme of Tajeddine et al.~\cite{TGE17} works for this setup.

Recall the development of the research on distributed storage systems:
Besides optimizing repair bandwidth or storage for distributed storage
systems, \emph{access complexity} is also a concern since the time of
reading data may cause a bottleneck.  The research of optimal-access
MDS codes started in \cite{TWB14}. A similar idea in locally
repairable codes was introduced in~\cite{GHSY12} for the sake of reducing
the nodes to be accessed.  The complexity of the computations done by
the servers for the various tasks of the distributed storage system is
an important parameter in such systems which didn't get enough
attention in PIR schemes.  The only work which took the computational
complexity of the servers, in the new PIR model, into account, was
done by Lavauzelle~\cite{Lav18}. Our approach is completely different.
For practical use of PIR protocols in distributed storage systems, we
should also consider the access complexity in the scheme. However, to
the best of our knowledge, the access complexity of PIR has not been
studied in previous works so far.  In fact, most known PIR schemes
require accessing almost all of the data stored on each server in the
worst case.  The next example demonstrates the concepts and
improvements for the access complexity that we study in this work.  We
will consider the worst case in this paper, but the average case is
also very interesting from a theoretical and practical points of view.

\begin{example}
{\it Consider the following 2-server PIR scheme where each server stores the whole database $\bfx=(\bfx^1,\bfx^2,\dots,\bfx^M)$. A user chooses an arbitrary binary vector $\bfa=(a_1,\dots,a_M)\in \mathbb{F}_2^{M}$ and then sends $\bfa$ and $\bfa+\bfe_f$ to the two servers respectively. From the responses $\sum_{i=1}^Ma_i\bfx^i$ and $\sum_{i=1}^Ma_i\bfx^i+\bfx^f$ the user successfully retrieves the desired file $\bfx^f$ privately. While the main advantage of this solution is its low download complexity, it suffers from extremely large access complexity since in the worst case almost all $M$ files are accessed on each server. Hence, in this scheme the bottleneck will no longer be the upload or download time, but the access time to read all files. The access complexity can be improved at the cost of increasing the storage overhead. That is, when storing more information in the servers the computation $\bfa \cdot \bfx$ will require to access a fewer number of files. For example, assume we also store in each server the file $\bfx_\Sigma$ given by $\bfx_\Sigma=\sum_{i=1}^M\bfx^i$. Then, in the worst case, the server will read only $M/2$ files. Thus we save half of the access complexity in the tradeoff of storing one additional file on each server.}
\end{example}

Intuitively for a PIR scheme in a distributed storage system there will be a relationship among the three parameters: storage overhead, PIR rate, and access complexity. The ultimate objective is to characterize the tradeoff of any two parameters when fixing the third. In this paper, we make a first step towards solving this problem. Given the number of servers $N$, the number of files $M$, we fix the size of the storage space for each server (and thus fix the storage overhead) and analyze the rate and access complexity of several PIR schemes.

The rest of the paper is organized as follows. In Section~\ref{sec:stat}, we give a formal statement of the PIR problem studied in the paper. In Section~\ref{sec:cov}, we discuss how to improve the access complexity using covering codes. In Section~\ref{sec:main}, we analyze the rate and access complexity for several PIR schemes. Finally, Section~\ref{sec:concl} concludes the paper.

\section{Problem Statement}\label{sec:stat}

A PIR scheme for a distributed storage system consists of the following parameters:

\begin{itemize}
  \item The system has $N$ servers. A database consists of $M$ files $\bfx^1,\bfx^2,\dots,\bfx^M$ of equal length $L$. The size of the database is then $ML$.

  \item Each server stores $\epsilon ML$ bits, $\epsilon>0$. Thus the total storage is $\epsilon NML$.

  \item The {\it storage overhead} is defined as the ratio between the total storage and the size of the database, i.e., $\epsilon N$.

  \item The storage code of the system is an encoding mapping $(\bfx^1,\bfx^2,\dots,\bfx^M)\in\mathbb{F}_2^{ML}\longrightarrow(\bfy_1,\dots,\bfy_N),~\bfy_n\in\mathbb{F}_2^{\epsilon ML}.$

  \item To retrieve a file, a user downloads $\rho_n$ bits from the $n$th server. The total download cost is then $\sum_{n=1}^N \rho_n$.

  \item The {\it rate} $\Omega$ of a PIR scheme is defined as the ratio of the size of a desired file and the number of downloaded bits, i.e. $\Omega=\frac{L}{\sum_{n=1}^N \rho_n}$.

  \item The $\rho_n$ downloaded bits from the $n$th server are functions of the data $\mathbf{y}_n$ it stores. The calculation of these downloaded bits requires the server to access $\delta_n ML$ bits in $\mathbf{y}_n$. $\delta_n$ is called the \emph{access complexity} of the $n$th server. The {\it total access complexity} is defined as $\Delta=\sum_{n=1}^N \delta_n$.
\end{itemize}

We call a 6-tuple $(N,M,L,\Omega,\Delta,\epsilon)$ {\it achievable}, if for a distributed storage system with parameters $N$, $M$ and $L$, we have a PIR scheme with rate $\Omega$, total access complexity $\Delta$, and each server stores a fraction $\epsilon>0$ of the whole database (and thus the storage overhead is $\epsilon N$). When $N$, $M$ and $L$ are clear from the context or not relevant, we abbreviate the 6-tuple as a 3-tuple $(\Omega,\Delta,\epsilon)$. The ultimate objective is to characterize the exact tradeoff between any two of the parameters $\Omega$, $\Delta$ and $\epsilon$ when fixing the third. In this paper we make a first step towards solving this problem by finding some achievable 3-tuples of $(\Omega,\Delta,\epsilon)$ with a predetermined $\epsilon$.

Intuitively the storage space can be divided into two parts. One part represents the indispensable storage for a particular PIR scheme and is referred to as {\it the storage for PIR}. This part represents the independent symbols stored on each server. The remaining part is jointly designed with the former part for improving the access complexity on each server. We illustrate this idea via the following example.

\begin{example}
{\it Consider a distributed storage system storing a database containing $M$ files $\bfx^1,\bfx^2,\dots,\bfx^M$ of equal size $L$. Assume we have $N=3$ servers with $\epsilon=1$, i.e., each server can store $ML$ bits. A user wants to retrieve a specific file $\bfx^f$.

One way is to allocate all the storage space to be used for PIR, so each server stores the whole database. Divide each file into two equal parts $\bfx^m=(\bfx^m_1,\bfx^m_2)$. A user chooses two independent random vectors $\bfa$ and $\bfb$ in $\mathbb{F}_2^{M}$. He asks for $\sum_{i=1}^Ma_i\bfx^i_1+\sum_{i=1}^Mb_i\bfx^i_2$, $\sum_{i=1}^Ma_i\bfx^i_1+\sum_{i=1}^Mb_i\bfx^i_2+\bfx^f_1$ and $\sum_{i=1}^Ma_i\bfx^i_1+\sum_{i=1}^Mb_i\bfx^i_2+\bfx^f_2$ from the three servers respectively. Therefore he downloads $\frac{3L}{2}$ bits, so the rate of the scheme will be $\Omega=2/3$. Each server will access almost all the data in the worst case. Altogether almost $3ML$ bits should be accessed throughout the scheme. Then the total access complexity will be $\Delta=3$. So we have an achievable 3-tuple $(\Omega=2/3,\Delta=3,\epsilon=1)$.

Yet another way is to only use half of the storage for PIR and the other half for improving the access complexity. Let each server store only half of the database. Say we have $\{\bfx^m_1:1\le m \le M\}$ on the first server, $\{\bfx^m_2:1\le m \le M\}$ on the second server and a coded form $\{\bfx^m_1+\bfx^m_2:1\le m \le M\}$ on the third server. Again a user chooses two independent random vectors $\bfa$ and $\bfb$ in $\mathbb{F}_2^{M}$. He makes two queries from each server and gets the responses as follows:
\begin{center}
\begin{small}
$\begin{array}{ccc}
  \text{Server I} & \text{Server II} & \text{Server III}  \\\hline
  \sum_{i=1}^Ma_i\bfx^i_1+\bfx^f_1 & \sum_{i=1}^Ma_i\bfx^i_2 & \sum_{i=1}^Ma_i(\bfx^i_1+\bfx^i_2) \\
  \sum_{i=1}^Mb_i\bfx^i_1 & \sum_{i=1}^Mb_i\bfx^i_2+\bfx^f_2 & \sum_{i=1}^Mb_i(\bfx^i_1+\bfx^i_2) \\\hline
\end{array}$
\end{small}
\end{center}
This is exactly the scheme of Tajeddine et al. in~\cite{TGE17} when using a $(3,2)$-MDS storage code. In this scheme the download will be $3L$ bits so the rate will be $\Omega=1/3$. To improve the access complexity, each server stores a coded form of the data using a covering code approach instead of storing $\{\bfx^m_1:1\le m \le M\}$, $\{\bfx^m_2:1\le m \le M\}$ or $\{\bfx^m_1+\bfx^m_2:1\le m \le M\}$ in their original form. For each query a server will only need to read about $0.22ML$ bits (to be explained in Section \ref{sec:cov}). So altogether at most $1.32ML$ bits are accessed in the scheme, resulting in the total access complexity $\Delta=1.32$. So we have an achievable 3-tuple $(\Omega=1/3,\Delta=1.32,\epsilon=1)$.}
\end{example}

\section{Access Complexity Using Covering Codes}
\label{sec:cov}

A (binary) {\it covering code} $\cC$ of length $\ell$ with {\it covering radius} $R$ is a set of vectors in $\{0,1\}^\ell$
such that for every vector $\bfu\in\{0,1\}^\ell$ there exists a codeword $\bfc\in\cC$
with Hamming distance $d_H(\bfc,\bfu)\leq R$. Covering codes were extensively studied and comprehensive information on them can be found in~\cite{ccbook}. For linear covering codes this property can be translated as follows.

\begin{proposition}{\rm\cite{ckms85}}
Let $\cC$ be a linear code of length $\ell$, dimension $k$, redundancy $r=\ell-k$, and a parity check matrix~$\cH$ of
size $r\times \ell$. Then, $\cC$ is a covering code with covering radius~$R$ if and only if for every column vector $\bfs\in\{0,1\}^r$
there exists a row vector $\bfy \in \{0,1\}^{\ell}$ of Hamming weight at most $R$, such that $\cH \cdot \bfy^T = \bfs$.
\end{proposition}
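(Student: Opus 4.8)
The statement is a clean translation of the covering-radius condition through the syndrome map, so the plan is to unwind both sides in terms of the syndrome homomorphism $\bfs \colon \{0,1\}^\ell \to \{0,1\}^r$, $\bfu \mapsto \cH\cdot\bfu^T$. First I would recall the two standard facts: (i) every coset of $\cC$ in $\{0,1\}^\ell$ is of the form $\bfu+\cC$ for some $\bfu$, and two vectors lie in the same coset if and only if they have the same syndrome $\cH\cdot\bfu^T$; (ii) since $\cH$ has rank $r$, the syndrome map is surjective, so cosets of $\cC$ are in bijection with syndromes $\bfs\in\{0,1\}^r$. With this dictionary in hand, the proof is just a double implication.

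For the forward direction, assume $\cC$ has covering radius $R$ and fix a syndrome $\bfs\in\{0,1\}^r$. Pick any representative $\bfu$ with $\cH\cdot\bfu^T=\bfs$ (possible by surjectivity). By the covering property there is a codeword $\bfc\in\cC$ with $d_H(\bfc,\bfu)\le R$. Set $\bfy=\bfu-\bfc$ (over $\Ftwo$ this is $\bfu+\bfc$); then $\wt(\bfy)=d_H(\bfu,\bfc)\le R$, and since $\bfc\in\cC$ we have $\cH\cdot\bfc^T=\zero$, hence $\cH\cdot\bfy^T=\cH\cdot\bfu^T=\bfs$. For the converse, assume the syndrome condition and take an arbitrary $\bfu\in\{0,1\}^\ell$; let $\bfs=\cH\cdot\bfu^T$. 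By hypothesis there is $\bfy$ with $\wt(\bfy)\le R$ and $\cH\cdot\bfy^T=\bfs$. Then $\bfu-\bfy$ has syndrome $\zero$, so $\bfc:=\bfu-\bfy\in\cC$, and $d_H(\bfu,\bfc)=\wt(\bfy)\le R$. Thus every $\bfu$ is within distance $R$ of $\cC$, i.e. the covering radius is at most $R$; combined with the existence of the weight-$R$ solution for at least one syndrome (or simply by definition of covering radius as the smallest such value, which the forward direction also yields), $R$ is exactly the covering radius when stated for the minimal such $R$.

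There is no real obstacle here — the only point needing a little care is the bookkeeping between ``$\cC$ is a covering code with covering radius $R$'' interpreted as ``$R$ is an upper bound on all covering distances'' versus ``$R$ equals the covering radius.'' The cleanest phrasing is to prove the equivalence ``$\forall \bfu\,\exists \bfc\in\cC,\ d_H(\bfu,\bfc)\le R$'' $\iff$ ``$\forall \bfs\in\{0,1\}^r\,\exists\bfy,\ \wt(\bfy)\le R,\ \cH\cdot\bfy^T=\bfs$'', which is exactly what the two displayed implications above establish; the covering radius of $\cC$ is then the least $R$ for which either equivalent condition holds. I would also note explicitly that surjectivity of the syndrome map is where the full-rank assumption $\operatorname{rank}\cH=r$ enters, since without it some syndromes $\bfs$ would be unachievable and the right-hand condition would be vacuously false for those $\bfs$ while the left-hand side could still hold.
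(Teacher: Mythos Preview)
Your argument is correct and is the standard one. Note, however, that the paper does not actually prove this proposition: it is stated with a citation to~\cite{ckms85} and no proof is given. The paragraph immediately following the proposition does offer an informal restatement in terms of coset leaders and coset weights, which is exactly the idea underlying your syndrome--coset dictionary, so your write-up aligns with the paper's intended explanation.
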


The other way to explain the covering radius of a linear code is as follows.
A column vector $\bfs\in\{0,1\}^r$ is actually a syndrome corresponding
to a particular coset of the code~$\cC$ in~$\mathbb{F}_2^{\ell}$. In this coset
one can find a vector $\bfy\in\mathbb{F}_2^{\ell}$
(not necessarily unique) with minimum Hamming weight. The vector $\bfy$ is
known as a {\it coset leader} and its weight is known
as the {\it coset weight}. Then one can get the vector $\bfs$ by summing up
the columns of $\cH$ indexed by the support set of $\bfy$.
Thus the covering radius of a linear code is exactly the maximum of all its
coset weights. Linear covering codes can be used to improve the access complexity as follows.

Suppose we have a database $\bfx$ which can viewed as a $t \times r$ matrix, i.e.
$\bfx=(\bfx_1,\dots,\bfx_r)$, where each~$\bfx_{i}$, ${1 \le i \le r}$,
is a column vector of length $t$.
Let $\cC$ be a linear code of length $\ell$, dimension $k$, redundancy $r=\ell-k$,
covering radius~$R$ and an $r \times \ell$ parity check matrix $\cH=[\bfh_1,\dots,\bfh_{\ell}]$.
Each server stores the database $\bfx$ encoded by the columns of $\cH$.
That is, the server stores $\ell$ column vectors $\bfz_i=\bfx \cdot \bfh_i$
for $1\leq i\leq \ell$. In other words, $\bfz_i$ is a linear combination of the files (column vectors) of the database $\bfx$,
with coefficients taken from $\bfh_i$. The user who chooses
an arbitrary binary column vector $\bfs=(\bfs_1,\ldots,\bfs_r)^T$, for the query, to
the $j$-th server, wants to retrieve from the $j$-th server the vector $\bfx \cdot \bfs$.
To compute $\bfx \cdot \bfs$, the server first finds the coset
leader~$\bfy$ such that ${\cH \cdot \bfy^T=\bfs}$. Then computing $\bfx \cdot \bfs$ is equivalent to
$$
\bfx \cdot \bfs = \bfx \cdot (\cH \cdot \bfy^T) =\bfx \cdot \left( \sum_{i:y_i=1}  \bfh_i \right)= \sum_{i:y_i=1} \bfz_i.
$$
Since the Hamming weight of the coset leader $\bfy$ is at most~$R$, it follows that
we only need to access at most $R$ columns of $\cH$ to compute $\bfx \cdot \bfs$. Moreover, $\cH$ can be chosen in the form $\cH=[I_r~|~A_{r\times(\ell-r)}]$ and thus we can always have a systematic form of the original data.
The asymptotic connection between the length $\ell$, covering radius $R$, and the
dimension $k$ of the linear covering code can be roughly estimated by the sphere-covering bound
$$2^{k}\cdot 2^{H(R/\ell)\ell} \approx 2^\ell,$$
or
$$\frac{k}{\ell}+H(R/\ell) = 1,$$
so $H(R/\ell) = 1-\frac{k}{\ell} = \frac{r}{\ell}$, where $H(\cdot)$ is the binary entropy function.
By setting the covering radius to be $R=\alpha r$ and the size of the storage $\ell=\beta r$, we have
\begin{equation} \label{coveringcode}
H\left(\frac{\alpha}{\beta}\right) = \frac{1}{\beta}.
\end{equation}
Solving this equation and the relation between $\alpha$ and $\beta$ can be represented as a function $\alpha=f(\beta)$, depicted in Fig.~\ref{fig:PIR}.

\medskip

\begin{figure}[h]
\centering
\vspace{-3ex}
\includegraphics[width=0.9\linewidth]{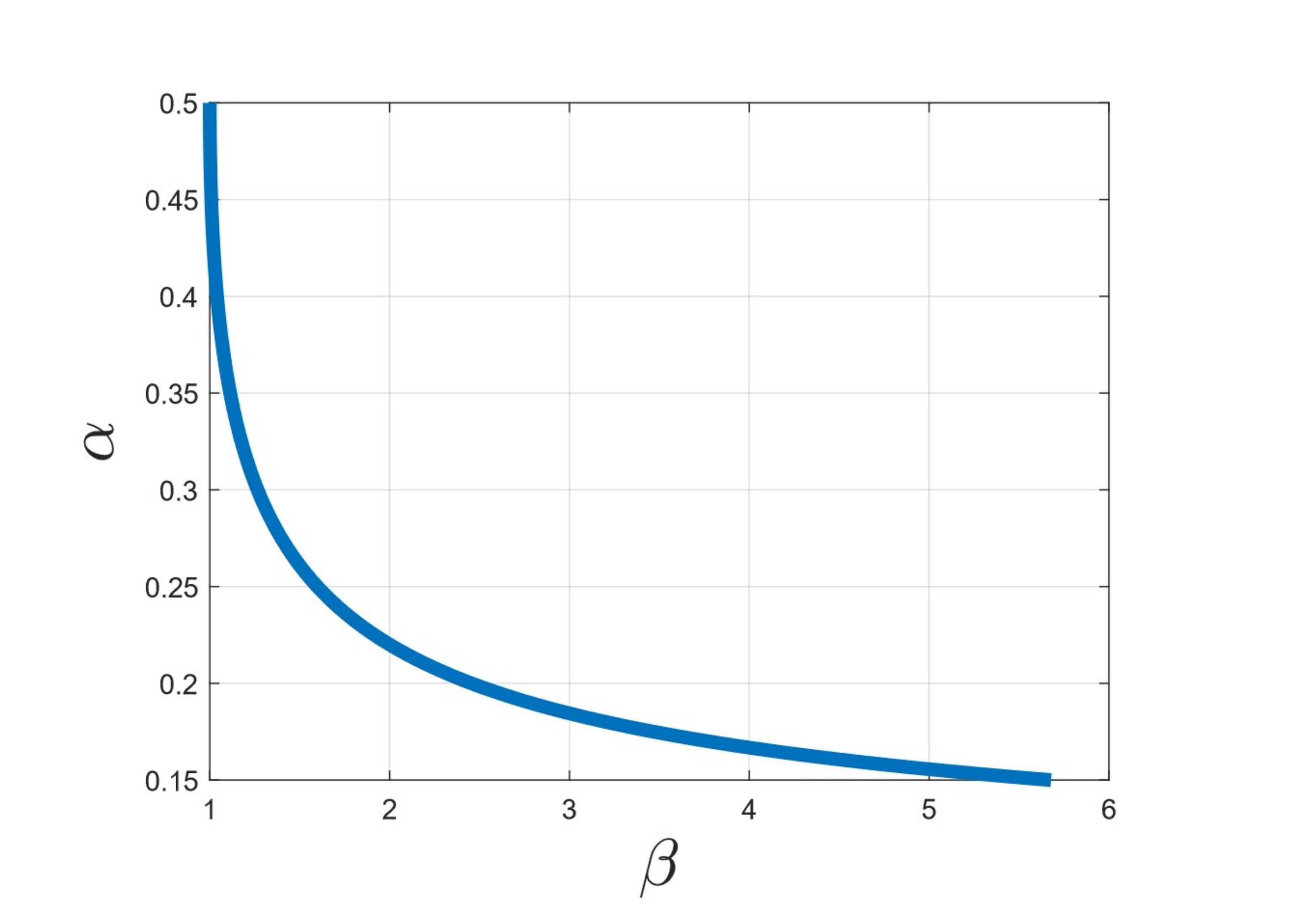}%
\caption[Access vs. Storage]
{Access vs. Storage}\vspace{-3ex}
\label{fig:PIR}%
\end{figure}

\medskip

It is not an easy task to explicitly construct linear codes achieving the above sphere-covering bound.
However, the existence of such codes has been proved:

\begin{proposition}{\rm\cite{CF85}}
  Let $0\le R\le \ell/2$. Then there exists an infinite sequence of linear codes $C_\ell$ of growing length $\ell$ with covering radius $R(C_\ell)\rightarrow R$ and the code rate is between $1-H(R/\ell)$ and $1-H(R/\ell)+O(\ell^{-1}\log_2\ell)$.
\end{proposition}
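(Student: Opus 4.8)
The statement packages two inequalities on the rate, and I would prove them by entirely different means. The left inequality, rate $\ge 1-H(R/\ell)$, is just the sphere-covering (volume) bound and is essentially free; the right inequality, rate $\le 1-H(R/\ell)+O(\ell^{-1}\log_2\ell)$, is the substantive part and amounts to an existence proof of a linear code whose redundancy exceeds the volume lower bound by only $O(\log_2\ell)$.

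For the volume bound: if $C\subseteq\Ftwo^\ell$ has covering radius at most $R$, the radius-$R$ Hamming balls around its codewords cover $\Ftwo^\ell$, so $|C|\cdot\sum_{i=0}^{R}\binom{\ell}{i}\ge 2^\ell$; since $R\le\ell/2$ the entropy estimate $\sum_{i=0}^{R}\binom{\ell}{i}\le 2^{\ell H(R/\ell)}$ applies, giving $\dim C\ge\ell(1-H(R/\ell))$ and hence rate at least $1-H(R/\ell)$. This holds for \emph{every} covering code of radius $R$, in particular the one constructed below, so only the upper estimate on the rate requires real work.

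For that estimate I would take a random linear code: sample a parity-check matrix $\cH\in\Ftwo^{r\times\ell}$ with i.i.d.\ uniform entries, where $r$ is chosen just below $\ell H(R/\ell)$, say $r=\lceil\ell H(R/\ell)\rceil-c\lceil\log_2\ell\rceil$ for a constant $c$ to be fixed. By the Proposition quoted just above, $C=\ker\cH$ has covering radius $\le R$ iff every syndrome is a sum of at most $R$ columns of $\cH$; the zero syndrome is trivial, and for a fixed nonzero $\bfs$ one needs $X_\bfs:=\#\{\bfy\in\Ftwo^\ell:\wt(\bfy)\le R,\ \cH\bfy^T=\bfs\}\ge1$. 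Here $\mathbb{E}[X_\bfs]=2^{-r}\sum_{i=0}^{R}\binom{\ell}{i}$, which by the choice of $r$ is polynomially large in $\ell$. One then wants $\Pr[X_\bfs=0]=o(2^{-r})$, so that a union bound over the $2^r-1$ nonzero syndromes leaves a genuine covering code; its rate is $1-r/\ell=1-H(R/\ell)+O(\ell^{-1}\log_2\ell)$ and its covering radius is at most $R$, so $R(C_\ell)/\ell\to R/\ell$ along the sequence, which is what is claimed.

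The crux, and the main obstacle I expect, is exactly this tail estimate, because the naive second-moment bound is too weak: the indicators $\mathbf{1}[\cH\bfy^T=\bfs]$ with $\wt(\bfy)\le R$ are only \emph{pairwise} independent, so Chebyshev gives only $\Pr[X_\bfs=0]\le 1/\mathbb{E}[X_\bfs]\approx 2^{r}/\sum_{i=0}^{R}\binom{\ell}{i}$, and the union bound then survives only for $r\lesssim\tfrac12\ell H(R/\ell)$, i.e.\ a rate near $1-\tfrac12 H(R/\ell)$, far outside the claimed band. Closing this factor-two gap is the real content of~\cite{CF85}. The route I would pursue is the method of moments: $(\cH\bfy_1^T,\dots,\cH\bfy_j^T)$ is uniform on $(\Ftwo^r)^j$ whenever $\bfy_1,\dots,\bfy_j$ are linearly independent over $\Ftwo$, so among the low-weight vectors there is enough partial independence to control the $j$-th factorial moment of $X_\bfs$ and to push $\Pr[X_\bfs=0]$ toward its Poisson value $e^{-\mathbb{E}[X_\bfs]}$ once $j$ is allowed to grow with $\ell$; an alternative is an expurgation-and-correction construction that first covers all but a small set of syndromes with a near-optimal random code and then handles the residue with a short appended block. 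Either way, the delicate bookkeeping of how fast $j$ (or the residual set) may grow is precisely what produces the stated $O(\ell^{-1}\log_2\ell)$ overhead, and is the part I expect to consume most of the effort; the remaining ingredients — the entropy estimate and the parity-check reformulation of the covering property recorded above — are routine.
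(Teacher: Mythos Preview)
The paper does not prove this proposition at all: it is quoted verbatim as a known result from~\cite{CF85} and used as a black box to justify that the relation $H(\alpha/\beta)=1/\beta$ in~\eqref{coveringcode} is asymptotically attainable by linear codes. There is therefore no ``paper's own proof'' to compare against.

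That said, your outline is a faithful sketch of the Cohen--Frankl style argument and you have correctly isolated the one genuine difficulty. The sphere-covering inequality indeed gives the lower edge of the rate band for free, and the parity-check reformulation of the covering property (the proposition you invoke from~\cite{ckms85}) is the right way to set up the random-code argument for the upper edge. Your diagnosis that the pairwise-independence/Chebyshev route loses a factor of two in the redundancy, leaving the union bound over $2^r$ syndromes hopeless, is exactly the obstruction one has to overcome. Of the two remedies you name, the one closer in spirit to~\cite{CF85} is the second: rather than pushing the moment method, one controls the \emph{expected number} of uncovered cosets for a random linear code of the target redundancy, shows it is at most polynomial in~$\ell$, and then absorbs the leftover cosets by a short extension of the code (which costs only $O(\log_2\ell)$ additional dimensions and hence contributes the $O(\ell^{-1}\log_2\ell)$ term in the rate). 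The higher-moment route you mention can also be made to work but is heavier than what is needed here. Either way, your proposal is essentially a correct roadmap; it simply goes well beyond what the present paper undertakes.
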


In a PIR scheme which does not take access complexity into consideration, the data
stored on a server can be usually represented as $r$ independent strings
of the same length and a query asks for a linear combination of these strings.
Using the covering code approach above, we may store a coded version of the data
instead of only storing the original form. The access complexity could be improved as follows.

\begin{theorem}
\label{thm:cov}
Suppose there exists a linear binary covering code with redundancy $r$, code length $\beta r$ and covering radius $\alpha r$. Given a set of $r$ independent strings $\{\bfx_1,\dots,\bfx_r\}$, a server can store a coded form of these strings as $\{\bfz_1,\dots,\bfz_{\beta r}\}$, such that computing any linear combination of $\{\bfx_1,\dots,\bfx_r\}$ requires only accessing at most $\alpha r$ substrings in $\{\bfz_1,\dots,\bfz_{\beta r}\}$. The asymptotic relation of $\alpha$ and $\beta$ is $H(\frac{\alpha}{\beta})=\frac{1}{\beta}$.
\end{theorem}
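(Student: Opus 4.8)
The plan is to turn the discussion preceding the statement into a clean construction and to verify its three assertions: that the stored data occupies $\beta r$ strings, that every queried linear combination is recoverable by accessing at most $\alpha r$ stored substrings, and that the limiting relation is $H(\alpha/\beta)=1/\beta$. First I would fix the ingredients. Let $\cC$ be a linear binary code of length $\ell=\beta r$, redundancy $r$, dimension $k=\ell-r$, and covering radius $R=\alpha r$, and let $\cH=[\bfh_1,\dots,\bfh_\ell]$ be an $r\times\ell$ parity check matrix; by column operations I may take $\cH=[I_r\mid A]$, so that $\bfh_i$ is the $i$-th unit vector for $1\le i\le r$. Viewing the $r$ independent strings as the columns of one array $\bfx=(\bfx_1,\dots,\bfx_r)$, each $\bfx_i$ a column of common length $t$, I define the stored substrings by $\bfz_i=\bfx\cdot\bfh_i$ for $1\le i\le\ell$; each $\bfz_i$ has length $t$, the total storage is $\ell=\beta r$ strings' worth, and $\bfz_i=\bfx_i$ for $1\le i\le r$, so the original data is still present.

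Next I would handle a query. For an arbitrary $\bfs\in\{0,1\}^r$ the server must output $\bfx\cdot\bfs$. Reading $\bfs$ as a coset syndrome, the covering-radius hypothesis furnishes a coset leader $\bfy\in\{0,1\}^\ell$ with $\cH\bfy^T=\bfs$ and $\wt(\bfy)\le R=\alpha r$; finding $\bfy$ depends only on $\bfs$ and $\cH$ and touches no database symbols. Then
$$\bfx\cdot\bfs=\bfx\cdot(\cH\bfy^T)=\sum_{i:y_i=1}\bfx\cdot\bfh_i=\sum_{i:y_i=1}\bfz_i,$$
a sum of at most $\alpha r$ of the stored substrings, so the server accesses only those, which is the access bound claimed.

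For the asymptotic relation I would invoke the sphere-covering bound $2^{k}\sum_{j\le R}\binom{\ell}{j}\ge 2^{\ell}$; taking logarithms, dividing by $\ell$, and using $\frac1\ell\log_2\sum_{j\le R}\binom{\ell}{j}\to H(R/\ell)$ gives $\frac{k}{\ell}+H(\alpha/\beta)\ge 1$, i.e.\ $H(\alpha/\beta)\ge\frac1\beta$. Conversely, the existence result \cite{CF85} quoted above yields, for any target ratio $\alpha/\beta\le\frac12$, an infinite family of codes of length $\ell$ with covering-radius ratio tending to $\alpha/\beta$ and redundancy ratio tending to $H(\alpha/\beta)$; equating this with $r/\ell=\frac1\beta$ gives $H(\alpha/\beta)=\frac1\beta$ in the limit.

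I expect no genuine obstacle in the construction itself, which is bookkeeping once the syndrome-decoding viewpoint is adopted; the only substantive input is the existence of covering codes meeting the sphere-covering bound to first order, which I would cite rather than reprove. I would also record the mild caveat that \cite{CF85} requires $R\le\ell/2$, i.e.\ $\alpha/\beta\le\frac12$, which is precisely the regime of the curve $\alpha=f(\beta)$ depicted in Fig.~\ref{fig:PIR}.
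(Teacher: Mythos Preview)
Your proposal is correct and follows essentially the same approach as the paper: the construction via the parity-check matrix $\cH$, the computation $\bfx\cdot\bfs=\sum_{i:y_i=1}\bfz_i$ using a coset leader, and the asymptotic relation via the sphere-covering bound together with the existence result of \cite{CF85} are exactly what the paper does in the discussion preceding the theorem. Your treatment is slightly more explicit in separating the two directions of the asymptotic relation and in noting the $\alpha/\beta\le\frac12$ caveat, but there is no substantive difference.
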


A further remark is that adding redundancies in the storage does not affect
the privacy of the original scheme. In essence the privacy is only related to the set of queries.
Finally, we note that the problem of reducing the access complexity when
replying to queries of the form mentioned in this section is not relevant only for PIR schemes.
The approach can be relevant for other models which require this or similar computation.
An example for such a problem was studied in~\cite{HBA98} for the partial-sum problem where the authors also used covering codes. However, since the computations involved integer numbers, the storage overhead was exponential with the number of items.

\section{PIR Rate vs. Access Complexity}\label{sec:main}

Now we begin to analyze the PIR rate and access complexity for two kinds of PIR schemes,
a scheme by Tajeddine et al.~\cite{TGE17} and a scheme
of Blackburn, Etzion and Paterson ({B-E-P} scheme)~\cite{BEP17}. Given $\epsilon$ indicating the
size of the storage space of each server, we first choose some $0 \le \pi \le \epsilon$ indicating
that the size of the storage for PIR, i.e., the amount of storage of independent symbols.
Using this $\pi$ fraction of storage we implement a proper PIR scheme with high rate.
Then we analyze the total access complexity of this scheme by making use of the remaining $\epsilon-\pi$ fraction of the storage space.

\subsection{The scheme of Tajeddine et al.~\cite{TGE17}}

When $\pi=\frac{1}{K}$, $K<N$, the rate of the scheme of Tajeddine et al.~\cite{TGE17} achieves the upper bound $\Omega=\frac{N-K}{N}$ proposed by Chan et al. in~\cite{CHY15}. So we begin with analyzing how to improve the access complexity of this scheme using the covering code approach.

Recall the framework of the scheme. Let each file ${\bfx^m\in\mathbb{F}_2^L}$ be represented
in the form of a matrix $\bfX^m=\{\bfx^m_{i,j}:{1\le i \le N-K}, 1\le j \le K\}$, where
each $\bfx^m_{i,j}$ represents a binary substring of length $\frac{L}{K(N-K)}$. Let $\Lambda=[\lambda_1,\dots,\lambda _N]$ be a $K\times N$ generator matrix of the storage code. Then the $n$th server stores $\bfX^m\lambda_n$, which
are $N-K$ linearly independent substrings as functions of $\bfx^m$. The whole storage
on the $n$th server is thus a concatenation of altogether $M(N-K)$ linearly independent substrings.
Each server will receive $K$ queries, where each query asks for a certain linear combination of these $M(N-K)$ substrings.

We make use of the additional storage of size $(\epsilon-\frac{1}{K})ML$ bits on each server.
Select a covering code with redundancy $r=M(N-K)$ with code length $\beta r$ where $\beta=K\epsilon$.
Instead of storing the $r$ substrings of $\bfy_n$ in their original form,
the server stores $\beta r$ substrings according to the covering code approach.
Then by Theorem \ref{thm:cov}, to answer each query, the server only needs
to access at most $\alpha r=f(\beta)r$ substrings. Recall that each server
receives $K$ queries. Thus each server will access at most $\min\{f(\beta)rK,r\}$ substrings,
since accessing the $r$ linearly independent substrings are already enough for computing any linear combination.
Therefore, the total number of bits accessed by each server is $\min\{ f(K\epsilon)ML,\frac{ML}{K}\}$
and thus the total access complexity over all servers will be $\Delta=\min\{ Nf(K\epsilon), \frac{N}{K}\}$.

For example, select $N=10$. Let $\epsilon=1$. We can take arbitrary $1\le K \le 9$ and apply the scheme of Tajeddine et al. The PIR rate and total access complexity are listed as follows. $\Delta'=\frac{N}{K}$ corresponds to the total access complexity when there is no redundancy in each server.

\begin{center}
$\begin{array}{cccc}
  K & \Omega=\frac{N-K}{N}  & \Delta & \Delta'\\\hline
  1  & 0.9  &  5.000 & 10.000\\\hline
  2  & 0.8  &  2.201 & 5.000\\\hline
  3  & 0.7  &  1.845 & 3.333\\\hline
  4  & 0.6  &  1.668 & 2.500\\\hline
  5  & 0.5  &  1.556 & 2.000\\\hline
  6  & 0.4  &  1.477 & 1.667\\\hline
  7  & 0.3  &  1.418 & 1.429\\\hline
  8  & 0.2  &  1.250 & 1.250\\\hline
  9  & 0.1  &  1.111 & 1.111\\\hline
\end{array}$
\end{center}

The table above indicates the covering code approach does improve the total access complexity for $K\neq 8,9$.
We further note that Tajeddine et al. mention that their scheme could be implemented
with cutting each file into $\frac{\text{l.c.m.}(K,N-K)}{K}\times K$ substrings instead
of $(N-K)\times K$, and correspondingly the number of subqueries for each server
is $\frac{\text{l.c.m}(K,N-K)}{N-K}$ instead of $K$. This modification allows us to further
improve the access complexity. Select a covering code with redundancy $r'=M\frac{\text{l.c.m.}(K,N-K)}{K}$ with
code length $\beta r'$ where $\beta=K\epsilon$. Then by Theorem \ref{thm:cov} each
query will access at most $\alpha r'=f(\beta)r'$ substrings. Each server
receives $\frac{\text{l.c.m}(K,N-K)}{N-K}$ queries and thus the number of bits to be
accessed on each server is at most $f(\beta)r'\frac{\text{l.c.m}(K,N-K)}{N-K}\times\frac{L}{K\times\frac{\text{l.c.m.}(K,N-K)}{K}}
=f(K\epsilon)ML\frac{\text{l.c.m.}(K,N-K)}{K(N-K)}.$
Therefore when $K$ and $N-K$ are not coprime, the total access complexity will be further improved as $\Delta=Nf(K\epsilon)\frac{\text{l.c.m.}(K,N-K)}{K(N-K)}=\frac{Nf(K\epsilon)}{\gcd(K,N-K)}$. Thus some of the results in the example above can be improved as follows.

\begin{center}
$\begin{array}{ccc}
  K & \Omega=\frac{N-K}{N}  & \Delta  \\\hline
  2  & 0.8  &  1.100 \\\hline
  4  & 0.6  &  0.834 \\\hline
  5  & 0.5  &  0.311 \\\hline
  6  & 0.4  &  0.739 \\\hline
  8  & 0.2  &  0.685 \\\hline
\end{array}$
\end{center}

In conclusion, applying the scheme of Tajeddine et al. results in several achievable 3-tuples as follows.
\begin{theorem}
In a distributed storage system consisting of $N$ servers, for every $1\le K <N$, $\epsilon\ge \frac{1}{K}$, the tuple $(\Omega=\frac{N-K}{N},\Delta=\min\{\frac{Nf(K\epsilon)}{\gcd(K,N-K)},N/K\},\epsilon)$ is achievable.
\end{theorem}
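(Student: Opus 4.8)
The statement is a packaging of the three ingredients built up in this subsection, so the plan is simply to assemble them in order and read off the parameters. First I would recall the framework of the scheme of~\cite{TGE17}: set the PIR fraction $\pi=\frac1K$, represent each file $\bfx^m\in\mathbb{F}_2^L$ as an $(N-K)\times K$ array $\bfX^m$ of substrings of length $\frac{L}{K(N-K)}$, fix a $K\times N$ generator matrix $\Lambda=[\lambda_1,\dots,\lambda_N]$ of an $(N,K)$-MDS storage code, and let server $n$ store $\{\bfX^m\lambda_n:1\le m\le M\}$, a list of $r=M(N-K)$ substrings that are linearly independent functions of the database. The user sends $K$ queries to each server, each asking for a linear combination of these $r$ substrings; correctness of reconstructing $\bfx^f$ from the $KN$ returned substrings is exactly the decoding argument of~\cite{TGE17} and uses only the MDS property, while privacy depends only on the query distribution. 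Since the total download is $KN\cdot\frac{L}{K(N-K)}=\frac{NL}{N-K}$ bits, we get $\Omega=\frac{N-K}{N}$, which establishes the rate.

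Next I would install the covering-code storage layer of \Tref{thm:cov} on the remaining $(\epsilon-\frac1K)ML$ bits of each server. Because $\epsilon\ge\frac1K$ we have $\beta:=K\epsilon\ge1$, so a binary linear covering code of redundancy $r$, length $\beta r$ and covering radius $\alpha r$ with $H(\alpha/\beta)=\frac1\beta$ exists by the existence result of~\cite{CF85}; storing the $\beta r$ coded substrings in place of the $r$ independent ones lets each query be answered by accessing at most $f(K\epsilon)r$ substrings. As each server receives $K$ queries, it accesses at most $\min\{Kf(K\epsilon)r,\,r\}$ substrings — the second term being the trivial bound, since the $r$ independent substrings by themselves suffice to compute any linear combination — i.e. at most $\min\{f(K\epsilon),\frac1K\}ML$ bits, which yields $\Delta=\min\{Nf(K\epsilon),\frac NK\}$.

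Finally I would sharpen the bound with the alternative partition noted in~\cite{TGE17}: cutting each file into a $\frac{\text{l.c.m.}(K,N-K)}{K}\times K$ array makes the PIR layer consist of $r'=M\frac{\text{l.c.m.}(K,N-K)}{K}$ independent substrings of length $\frac{L}{\text{l.c.m.}(K,N-K)}$, with each server receiving only $\frac{\text{l.c.m.}(K,N-K)}{N-K}$ subqueries. Running the covering code with the same $\beta=K\epsilon$ on these $r'$ substrings bounds the bits accessed per server by $f(K\epsilon)r'\cdot\frac{\text{l.c.m.}(K,N-K)}{N-K}\cdot\frac{L}{\text{l.c.m.}(K,N-K)}=f(K\epsilon)ML\frac{\text{l.c.m.}(K,N-K)}{K(N-K)}$; using $\frac{\text{l.c.m.}(K,N-K)}{K(N-K)}=\frac1{\gcd(K,N-K)}$ and summing over the $N$ servers gives $\Delta=\frac{Nf(K\epsilon)}{\gcd(K,N-K)}$, and taking the minimum of this with the trivial $\frac NK$ bound produces the claimed tuple. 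The only genuinely substantive point to check is that neither the coded storage layer nor the refined partition disturbs correctness or privacy — correctness because a coded layer reproduces exactly the same linear combinations the original servers would return, privacy because it is a property of the queries alone (as remarked after \Tref{thm:cov}) — so the bulk of the work is the consistent bookkeeping of substring counts and length normalizations, in particular verifying that the $\min$ with $\frac NK$ is precisely what governs the large-$K\epsilon$ regime in which bypassing the covering code is cheaper than routing through it.
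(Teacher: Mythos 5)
Your proposal is correct and follows exactly the route the paper takes: it uses the $(N-K)\times K$ partition of each file with the MDS storage code of Tajeddine et al. to establish $\Omega=\frac{N-K}{N}$, then installs the covering-code layer from Theorem~\ref{thm:cov} with redundancy $r=M(N-K)$ (or, in the refined variant, $r'=M\,\mathrm{l.c.m.}(K,N-K)/K$) and $\beta=K\epsilon$, and finally reduces $\mathrm{l.c.m.}(K,N-K)/(K(N-K))$ to $1/\gcd(K,N-K)$ and takes the minimum with the trivial $N/K$ bound. The paper presents this derivation as running text preceding the theorem rather than as a self-contained proof, and your only additions — spelling out that correctness rests on the MDS decoding of~\cite{TGE17} and that privacy is untouched by the coded storage layer — are points the paper also makes but somewhat earlier in Section~\ref{sec:cov}; they are not a different argument.
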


We close this subsection by discussing the possibility of further improving the access complexity for the scheme of Tajeddine et al.
Note that each server may receive multiple queries. The data accessed by a server when
responding to different queries may have certain overlap. Reconsider Example 1, if we
have two queries, then the server may read only $\frac{3M}{4}$ files instead of
reading $\frac{M}{2}$ files twice. Further improving the access complexity for the scheme
of Tajeddine et al. (by taking advantage of possible overlap when reading multiple queries)
relies on a good solution to the coding theoretic problem presented in the next subsection.

\subsection{Generalized coset weights}

{Given a binary linear code, for every $\tau$ cosets of the code, choose one vector from each coset and find the size of the union of their support sets.
The minimum of this value is called the {\it $\tau$-coset weight} of these $\tau$ cosets. What is the maximum value of all $\tau$-coset weights? When $\tau=1$ this is the covering radius $R$ of the linear code. When $\tau \ge 2$, we would like to see weights smaller than $\tau R$.}

The $\tau$-coset weights (for covering) are akin to the generalized Hamming weights
(for distance) defined in~\cite{Wei91} which
were considered in hundreds of papers.

Let $[n,k,d]$ code denote a binary linear code of length $n$, dimension $k$, and minimum Hamming distance $d$.

\begin{lemma}
The $\tau$-coset weight of a code $\cC$ is the minimum number of columns $\ell$ in the parity check matrix $\cH$ of $\cC$,
such that for each $\tau$ syndromes of $\cC$, there exists a set of $\ell$ columns of $\cH$ which has
$\tau$ linear combinations of this set to form these $\tau$ syndromes.
\end{lemma}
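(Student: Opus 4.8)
The proof is in essence a translation between the support language used to define $\tau$-coset weights and the column language of a parity check matrix, so the plan is to set up that dictionary carefully and then to reconcile a $\max$--$\min$ quantity with a ``minimum $\ell$'' quantity by a monotonicity (padding) argument. First I would fix a parity check matrix $\cH=[\bfh_1,\dots,\bfh_n]$ of the $[n,k]$ code $\cC$. Recall that $\bfy\in\Ftwon$ lies in the coset of $\cC$ with syndrome $\cH\bfy^{T}=\sum_{i\in\supp(\bfy)}\bfh_i$, and that cosets of $\cC$ correspond bijectively to syndromes. The one elementary fact I would isolate: for an index set $S\subseteq\{1,\dots,n\}$, the vectors $\bfy$ with $\supp(\bfy)\subseteq S$ are exactly the $0/1$-indicators of subsets $T\subseteq S$, and their syndromes $\sum_{i\in T}\bfh_i$ are exactly the $\mathbb{F}_2$-linear combinations of the columns $\{\bfh_i:i\in S\}$.

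With this in hand, I would first prove the per-tuple identity: given cosets with syndromes $\bfs^{(1)},\dots,\bfs^{(\tau)}$,
$$\min_{\bfy^{(1)},\dots,\bfy^{(\tau)}}\ \Bigl|\bigcup_{j=1}^{\tau}\supp(\bfy^{(j)})\Bigr|\ =\ \min\bigl\{\,|S|\ :\ \text{each }\bfs^{(j)}\text{ is a sum of columns of }\cH\text{ indexed by }S\,\bigr\},$$
where $\bfy^{(j)}$ runs over the $j$-th coset. The inequality ``$\ge$'' takes optimal representatives $\bfy^{(j)}$ and puts $S=\bigcup_j\supp(\bfy^{(j)})$, noting each $\bfs^{(j)}=\sum_{i\in\supp(\bfy^{(j)})}\bfh_i$ uses only columns in $S$. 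The inequality ``$\le$'' takes an optimal $S$, chooses subsets $T_j\subseteq S$ with $\sum_{i\in T_j}\bfh_i=\bfs^{(j)}$, and lets $\bfy^{(j)}$ be the indicator of $T_j$, so that $\bigcup_j\supp(\bfy^{(j)})\subseteq S$. Maximizing over all $\tau$-tuples of cosets --- equivalently, via the bijection, over all $\tau$-tuples of syndromes --- identifies the $\tau$-coset weight of $\cC$ with the number $W:=\max_{\bfs^{(1)},\dots,\bfs^{(\tau)}}g(\bfs^{(1)},\dots,\bfs^{(\tau)})$, where $g$ denotes the right-hand side above.

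It then remains to see that $W$ equals the minimum $\ell$ in the statement. The key observation is monotonicity: if a column set $S$ witnesses the $\tau$ syndromes, so does any superset (use the extra columns with zero coefficient), and since the full column set always works we have $W\le n$. Hence for $\ell=W$ every $\tau$-tuple of syndromes has a witnessing set of size $g\le W$, which we pad with unused columns to size exactly $W$; while for $\ell=W-1$ the maximizing $\tau$-tuple has $g=W>\ell$, so it admits no witnessing set of size $\le\ell$, in particular none of size exactly $\ell$. Therefore the least $\ell$ for which every $\tau$ syndromes admit a size-$\ell$ witnessing column set is exactly $W$, which is the claim.

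I do not expect a genuine obstacle here; the work is all bookkeeping, and the points that need care are formal: reading ``$\tau$ linear combinations of this set'' as $\mathbb{F}_2$-subset-sums so that it matches $0/1$ support vectors; using the padding/monotonicity step to pass between ``a set of exactly $\ell$ columns'' and ``a set of at most $\ell$ columns''; and invoking the coset--syndrome bijection to move the outer maximum freely between cosets and syndromes. The argument is unaffected by whether or not the $\tau$ cosets are required to be distinct.
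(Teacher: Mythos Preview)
The paper does not give a proof of this lemma; it is stated as a direct reformulation of the definition of $\tau$-coset weights via the standard coset--syndrome correspondence, and the paper moves on immediately to the next theorem. Your proposal is correct and supplies exactly the bookkeeping the paper omits: the dictionary $\supp(\bfy)\leftrightarrow S$, $\bfs=\sum_{i\in S}\bfh_i$ is precisely the content of the lemma, and your padding/monotonicity step cleanly reconciles the ``$\max$--$\min$'' in the definition with the ``minimum $\ell$ such that every $\tau$-tuple is covered'' in the statement. There is nothing to compare approaches against; what you wrote is the intended argument spelled out in full.
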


\begin{theorem}
The $\tau$-coset weight of an $[n,k,d]$ code $\cC$ is at most $n-k$ for each $\tau \geq 1$.
\end{theorem}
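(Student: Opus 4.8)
The plan is to exhibit, once and for all, a single set of $n-k$ coordinates that simultaneously serves every coset, so that the claimed bound is in fact independent of $\tau$. The cleanest route is through the reformulation in the preceding Lemma: to show that the $\tau$-coset weight of $\cC$ is at most $n-k$, it suffices to produce a set of $\ell \le n-k$ columns of a parity-check matrix $\cH$ of $\cC$ such that \emph{every} syndrome of $\cC$ is a linear combination of these columns; then the same set of columns witnesses, for each prescribed tuple of $\tau$ syndromes $\bfs^{(1)},\dots,\bfs^{(\tau)}$, the required $\tau$ linear combinations.

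First I would record that $\cH$ has rank $r = n-k$, hence possesses $r$ linearly independent columns; let $S \subseteq \{1,\dots,n\}$ with $|S| = r$ index such a set, and let $\cH_S$ denote the corresponding $r \times r$ submatrix, which is invertible over $\Ftwo$. Consequently the columns of $\cH$ indexed by $S$ span the whole syndrome space $\Ftwo^{\,r}$, so every syndrome is (uniquely) a linear combination of them. Next, given any $\tau$ cosets with syndromes $\bfs^{(1)},\dots,\bfs^{(\tau)} \in \Ftwo^{\,r}$, I would for each $i$ solve $\cH_S\,\bfy^{(i)T} = \bfs^{(i)}$ and extend by zeros outside $S$, obtaining a vector $\bfy^{(i)}$ with $\cH\bfy^{(i)T} = \bfs^{(i)}$ and $\supp(\bfy^{(i)}) \subseteq S$. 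Since $\bigl|\bigcup_{i=1}^{\tau}\supp(\bfy^{(i)})\bigr| \le |S| = r = n-k$, the $\tau$-coset weight of this tuple of cosets is at most $n-k$; as the tuple was arbitrary, the maximum over all $\tau$-tuples of cosets — i.e.\ the $\tau$-coset weight of $\cC$ — is at most $n-k$, and the independence of the argument from $d$ shows that the minimum distance plays no role.

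There is essentially no analytic obstacle here; the one point worth emphasizing is that the \emph{same} coordinate set $S$ works for all $\tau$ cosets at once, which is precisely what makes the bound uniform in $\tau$ and consistent with the $\tau = 1$ case, where it specializes to the classical redundancy bound $R \le n-k$ on the covering radius. I would then close with the remark that this bound is generally far from tight once $\tau \ge 2$ (and often already for $\tau = 1$): the substance of the generalized-coset-weight problem is to understand how much smaller than both $\tau R$ and $n-k$ the $\tau$-coset weight can be driven, so the estimate above should be read only as a baseline sanity check rather than as the final answer.
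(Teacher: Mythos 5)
Your proof is correct and follows exactly the same idea as the paper's one-line argument: a fixed set of $n-k$ linearly independent columns of $\cH$ spans the full syndrome space, hence covers a representative of every coset simultaneously, uniformly in $\tau$. You simply spell out the details (invertibility of $\cH_S$, extension by zeros, bounding the support union) that the paper leaves implicit.
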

\begin{proof}
The parity check matrix $\cH$ of $\cC$ has $n-k$ linearly independent columns. A set of such $n-k$ columns
covers a word in each coset of $\cC$.
\end{proof}

\begin{theorem}
The $\tau$-coset weight of the $[2^m-1,2^m -1 -m,3]$ Hamming weight is $\tau$ for each $1 \leq \tau \leq m$.
\end{theorem}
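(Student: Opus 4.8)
The plan is to reduce the statement, via the preceding Lemma, to an elementary claim about subset sums of nonzero vectors and then prove matching upper and lower bounds. Recall that a parity check matrix $\cH$ of the $[2^m-1,2^m-1-m,3]$ Hamming code has as its columns exactly the $2^m-1$ nonzero vectors of $\Ftwom$, each occurring once. Hence a ``set of $\ell$ columns of $\cH$'' is nothing but a set of $\ell$ distinct nonzero vectors of $\Ftwom$, and by the Lemma the $\tau$-coset weight of this code equals
\[
\max_{\bfs_1,\dots,\bfs_\tau\in\Ftwom}\ \min\mathset{\,\abs{S}\ :\ S\subseteq\Ftwom\setminus\mathset{\zero},\ \text{each }\bfs_i\text{ is a sum of a subset of }S\,}.
\]
So it suffices to show that this inner minimum is at most $\tau$ for every choice of $\bfs_1,\dots,\bfs_\tau$, and equals $\tau$ for at least one such choice.

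First I would establish the upper bound, which is almost immediate. Given any syndromes $\bfs_1,\dots,\bfs_\tau$, take $S=\mathset{\bfs_1,\dots,\bfs_\tau}\setminus\mathset{\zero}$. Then $\abs{S}\le\tau$; each nonzero $\bfs_i$ is the singleton subset sum of $S$ over $\mathset{\bfs_i}$, and each $\bfs_i=\zero$ is the empty subset sum. Thus $S$ realizes all $\tau$ syndromes, so the inner minimum is $\le\tau$, and therefore the $\tau$-coset weight is $\le\tau$. Note this argument uses nothing about $\tau$; it is the matching lower bound that needs the hypothesis $\tau\le m$.

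Next I would prove the lower bound. Since $\tau\le m$, pick $\bfs_1,\dots,\bfs_\tau\in\Ftwom$ linearly independent, and let $S$ be any set of nonzero vectors such that each $\bfs_i$ is a subset sum of $S$. Then each $\bfs_i\in\Span{S}$, so $\Span{\bfs_1,\dots,\bfs_\tau}\subseteq\Span{S}$ and hence $\tau=\dim\Span{\bfs_1,\dots,\bfs_\tau}\le\dim\Span{S}\le\abs{S}$. Thus for this particular tuple the inner minimum is exactly $\tau$ (it is attained by $S=\mathset{\bfs_1,\dots,\bfs_\tau}$), so the $\tau$-coset weight is $\ge\tau$. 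Combining the two bounds yields the claim for every $1\le\tau\le m$.

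There is no genuine obstacle here; the only care needed is keeping the max--min structure of the definition straight. The conceptual point worth stressing is that the upper bound is trivial precisely because the Hamming parity check matrix already contains every nonzero column, so each requested syndrome is available ``for free'' as a single column and coded storage buys nothing. (The same reasoning shows the value becomes $m$ once $\tau>m$, consistent with the general bound $n-k$ of the previous Theorem, as one can no longer produce more than $m$ independent syndromes.)
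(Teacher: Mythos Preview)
Your argument is correct. You translate the problem via the Lemma into a max--min over syndrome tuples of the smallest set of nonzero $m$-vectors whose subset sums realize all the syndromes; the upper bound follows by taking the syndromes themselves as columns (every nonzero vector is already a column of the Hamming parity check matrix), and the lower bound comes from choosing $\tau\le m$ linearly independent syndromes and using a dimension count. Both steps are airtight, and your equating the Lemma's ``minimum $\ell$ working for all tuples'' with the max over tuples of the per-tuple minimum is justified since one can always pad a smaller covering set with extra columns.

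As for comparison: the paper states this theorem (and the companion one for the extended Hamming code) without proof, deferring details to a full version. Your proof is the natural one and almost certainly what the authors have in mind; indeed, the proof they do give for the general upper bound $n-k$ uses exactly the same spanning/independence idea that drives your lower bound here.
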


\begin{theorem}
The $\tau$-coset weight of the $[2^m,2^m -1 -m,4]$ extended Hamming weight is $\tau +1$ for each $1 \leq \tau \leq m$.
\end{theorem}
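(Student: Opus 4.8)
The plan is to translate the statement, via the Lemma above, into a question about spans of columns of a parity-check matrix, and then exploit the fact that every column of the extended Hamming parity-check matrix avoids one fixed hyperplane. Fix $\cH$ in its standard form, so that its $2^m$ columns are exactly the vectors of $\mathbb{F}_2^{m+1}$ whose last coordinate equals $1$ (this is the ordinary Hamming parity-check matrix bordered by an all-ones row together with one extra column, and one checks directly that its null space has minimum distance $4$). Then the syndromes of $\cC$ range over all of $\mathbb{F}_2^{m+1}$, and since an $\mathbb{F}_2$-linear combination of a set of columns is exactly an element of their span, the Lemma says that the $\tau$-coset weight is the least $\ell$ for which every $\tau$-subset of $\mathbb{F}_2^{m+1}$ lies in the span of some $\ell$ columns of $\cH$. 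Let $\Pi=\{\bfv\in\mathbb{F}_2^{m+1}:v_{m+1}=0\}$, the unique hyperplane containing no column. Two elementary facts drive everything: (i) the span of any nonempty set of columns is not contained in $\Pi$; and (ii) conversely, any subspace $U\not\subseteq\Pi$ is spanned by $\dim U$ vectors all having last coordinate $1$ (hence all columns) — in any basis of $U$ some vector $\bfb_0$ has last coordinate $1$, and adding $\bfb_0$ to each remaining basis vector that lies in $\Pi$ makes all of them have last coordinate $1$.

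For the upper bound, take arbitrary syndromes $\bfs_1,\dots,\bfs_\tau$ and put $S=\Span{\bfs_1,\dots,\bfs_\tau}$, so $\dim S\le\tau$. If $S\not\subseteq\Pi$, then by (ii) $S$ is spanned by at most $\tau$ columns, and we pad with arbitrary further columns up to exactly $\tau+1$ (allowed since there are $2^m\ge m+1\ge\tau+1$ columns in all). If $S\subseteq\Pi$, pick any column $\bfh$; then $\bfh\notin S$, the subspace $S+\Span{\bfh}$ has dimension $\dim S+1\le\tau+1$ and is not contained in $\Pi$, hence by (ii) is spanned by at most $\tau+1$ columns, whose span still contains all $\bfs_j$. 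Either way $\tau+1$ columns suffice, so the $\tau$-coset weight is at most $\tau+1$.

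For the lower bound we use the hypothesis $\tau\le m$. Since $\dim\Pi=m\ge\tau$, choose linearly independent syndromes $\bfs_1,\dots,\bfs_\tau\in\Pi$. Suppose some set of $\tau$ columns had all of $\bfs_1,\dots,\bfs_\tau$ in its span $U$. Then $\dim U\le\tau$, while linear independence of the $\bfs_j$ forces $\dim U\ge\tau$; hence $\dim U=\tau$ and $U=\Span{\bfs_1,\dots,\bfs_\tau}\subseteq\Pi$, contradicting (i). Therefore no $\tau$ columns cover this particular $\tau$-subset, so the $\tau$-coset weight exceeds $\tau$. Combined with the upper bound, it equals $\tau+1$.

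The argument is short; the one genuinely load-bearing point — and the reason the value is $\tau+1$ here rather than the $\tau$ obtained for the ordinary Hamming code — is that every column lies off the hyperplane $\Pi$, so a family of $\tau$ linearly independent syndromes sitting inside $\Pi$ cannot be covered by only $\tau$ columns: one extra column is forced, merely to leave $\Pi$. Equivalently, in the block form $\bfh=(\bfv,1)$ of the columns, the subset sums of $\tau$ columns whose last coordinate is $0$ form a subspace of dimension at most $\tau-1$, which cannot contain $\tau$ independent vectors — the same obstruction expressed without naming $\Pi$.
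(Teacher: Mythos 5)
Your argument is correct and complete. The paper states this theorem without proof (it defers the coset-weight results to a full version), so there is nothing to compare against; judged on its own, your reduction via the Lemma to ``every $\tau$-set of syndromes lies in the span of some $\ell$ columns,'' together with the observation that all columns of the standard parity-check matrix lie off the hyperplane $\Pi=\{v_{m+1}=0\}$, gives a clean and valid proof. Both directions check out: fact (ii) (any subspace $U\not\subseteq\Pi$ has a basis of vectors with last coordinate $1$, hence of columns) yields the upper bound $\tau+1$ in both cases $S\subseteq\Pi$ and $S\not\subseteq\Pi$, and the lower bound correctly uses $\tau\le m=\dim\Pi$ to pick $\tau$ independent syndromes inside $\Pi$, which no $\tau$ columns can span since their span would then be forced to equal a subspace of $\Pi$ while containing a column. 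The only implicit step worth making explicit is that passing to this particular parity-check matrix is without loss of generality, since a change of parity-check matrix is an invertible linear map on syndrome space and does not affect coset weights.
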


For many types of BCH codes with minimum distance $d$ and covering radius $R$ we have proved that
the 2-coset weight is smaller than $2R$. This was generalized in some cases for $\tau$-coset weights
with $\tau >2$. This and other related results will be considered in the full version of this paper.

\subsection{Using several parallel B-E-P schemes}

Consider the scheme of Tajeddine et al. when $K=1$, i.e., replicated databases. Each file $\bfx^m\in\mathbb{F}_2^L$ is divided into $N-1$ substrings $\bfx^m_1,\dots,\bfx^m_{N-1}$ of length $\frac{L}{N-1}$. Each server stores $\bfy=\{\bfx^1_1,\dots,\bfx^1_{N-1},\bfx^2_1,\dots,\bfx^2_{N-1},\dots,\bfx^M_1,\dots,\bfx^M_{N-1}\}$, altogether $(N-1)M$ substrings. A user chooses a random binary vector $\bfv$ of length $(N-1)M$. The $N$th server receives the query vector $\bfv$ and the $n$th server receives the query vector $\bfv+\bfe_{(f-1)(N-1)+n}$, $1\le n \le N-1$, where $f$ is the index of the desired file. Then from the response of the $n$th server and the $N$th server, the user retrieves the string $\bfx^f_n$, $1\le n \le N-1$.

The B-E-P scheme recently proposed by Blackburn, Etzion and Paterson suggests a different way, whose original motivation is to optimize the upload complexity of the query vectors. A user who wants to retrieve the file $\bfx^f$ chooses $M$ elements $z_1,\dots,z_M\in\mathbb{Z}_{N}$ uniformly and independently at random. The $n$th server receives $(b_{1n},\dots,b_{Mn})$ where $b_{fn}=z_f+n\pmod{N}$ and $b_{mn}=z_m$ for $m\neq f$ and then responds with $\sum_{m=1}^{M} \bfx^m_{b_{mn}}$, where $\bfx^m_{0}$ represents the all-zero vector.

The main difference is that a query in the former scheme asks for an arbitrary linear combination of all the $(N-1)M$ substrings while a query in the latter scheme asks for a linear combination with a restricted pattern, i.e., at most one substring from each file is involved in the linear combination. This restriction may allow for a better way to improve the access complexity than the covering code approach.

\begin{example}
$N=3$, $M=3$. Consider the necessary amount of storage overhead for a PIR scheme with rate $2/3$ and total access complexity $1$.
The scheme of Tajeddine et al. gives an achievable tuple $(2/3,1,13/6)$, which
applies a covering code of length 13, redundancy 6 and covering radius 2 to store the six substrings
$\{\bfx^1_1,\bfx^1_2,\bfx^2_1,\bfx^2_2,\bfx^3_1,\bfx^3_2\}$. $\epsilon=13/6$ cannot
be improved for the scheme of Tajeddine et al. since 13 is the minimum length of a linear covering
code with redundancy 6 and covering radius 2~\cite[p. 202]{ccbook}.
However, if we use the B-E-P scheme, then each server can store the following 11 substrings:
$\{\bfx^1_1,\bfx^1_2,\bfx^2_1,\bfx^2_2,\bfx^3_1,\bfx^3_2,\bfx^1_1+\bfx^2_2,
\bfx^2_1+\bfx^3_2,\bfx^3_1+\bfx^1_2,\bfx^1_1+\bfx^2_1+\bfx^3_1,\bfx^1_2+\bfx^2_2+\bfx^3_2\}$.
This already guarantees that we only need to read at most two substrings for any
query in the B-E-P scheme. Thus the B-E-P scheme gives an achievable tuple (2/3,1,11/6).
\end{example}

In the former two subsections, on each server $\frac{1}{K}ML$ bits are allocated as the storage for PIR and the remaining ${(\epsilon-\frac{1}{K})ML}$ bits are designed for improving the access complexity. Now consider the case when $\frac{p}{q}ML$ bits are allocated for the PIR scheme and the remaining $(\epsilon-\frac{p}{q})ML$ bits are used for improving the access complexity, where $\frac{1}{N}\le\frac{p}{q}\le \epsilon$ cannot be simplified to the form $\frac{1}{K}$. Once we have a proper PIR scheme with good rate in this setup, the idea for improving the access complexity will be exactly the same approach aforementioned.

As shown by~\cite{CHY15}, such a PIR scheme will have rate at most $\frac{N-\frac{q}{p}}{N}$. This model was then named as the storage constrained PIR and bounds on the capacity were considered in \cite{AKT18}. Particularly, when further restricting the $\frac{p}{q}ML$ bits of storage to be uncoded, \cite{AKT18} determined the exact capacity which is achieved by a memory sharing method plus the capacity-achieving schemes of \cite{SJ17B}. Since the B-E-P scheme has asymptotically optimal rate, it is natural to consider the memory sharing method using several parallel B-E-P schemes.

Suppose that the file size is $L=t\ell$ and we divide each file $\bfx^m$ into $t$ parts of equal size $\ell$, $\{\bfx^m_1,\dots,\bfx^m_{t}\}$. We choose some $d$ parts from each file and consider them as a new subdatabase, say $\{\bfx^m_j:1\le m \le M, 1\le j \le d\}$. Then we may perform a B-E-P subscheme for this subdatabase on some $d+1$ servers.
This subscheme occupies $\frac{d}{t}ML$ bits on each of the $d+1$ servers involved and contributes $\frac{d+1}{t}L$ bits to the download cost. A combined PIR scheme by the memory sharing method, is done by just dividing the database into several subdatabases and then implementing several parallel B-E-P subschemes, each on a certain subset of servers. Note that a sufficiently large $t$ and a proper way to allocate servers for each subscheme (say, by permutations) will guarantee that each server stores exactly $\frac{p}{q}ML$ bits. Since in this scheme each server has uncoded storage, then as suggested by \cite{AKT18}, to achieve the asymptotically optimal rate, each subscheme should be implemented on either $\lceil \frac{Np}{q} \rceil$ or $\lfloor \frac{Np}{q} \rfloor$ servers.

The rate of this scheme can be calculated as follows. Altogether a proportion $\eta$ of the database is involved in subschemes on $\lceil \frac{Np}{q} \rceil$ servers and the rest proportion $1-\eta$ is involved in subschemes on $\lfloor \frac{Np}{q} \rfloor$ servers, where $\eta \lceil \frac{Np}{q} \rceil + (1-\eta) \lfloor \frac{Np}{q} \rfloor = Np/q$. The total download is then

$$ L \cdot \big( \eta \frac{\lceil \frac{Np}{q} \rceil}{\lceil \frac{Np}{q} \rceil-1} + (1-\eta) \frac{\lfloor \frac{Np}{q} \rfloor}{\lfloor \frac{Np}{q} \rfloor-1}  \big).$$

Finally, the rest $(\epsilon-\frac{p}{q})ML$ bits on each server are used for improving the access complexity via the covering code approach aforementioned.

\begin{theorem}
In a distributed storage system consisting of $N$ servers, for every rational number $\frac{1}{N}\le\frac{p}{q}\le \epsilon$, the tuple $(\Omega,\frac{Np}{q}f(\epsilon\frac{q}{p}),\epsilon)$ is achievable, where
$$\Omega= \big( \eta \frac{\lceil \frac{Np}{q} \rceil}{\lceil \frac{Np}{q} \rceil-1} + (1-\eta) \frac{\lfloor \frac{Np}{q} \rfloor}{\lfloor \frac{Np}{q} \rfloor-1}  \big)^{-1}.$$

\end{theorem}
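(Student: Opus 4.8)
The plan is to obtain the claimed tuple by \emph{memory sharing} several parallel B‑E‑P subschemes~\cite{BEP17} for the rate, and then layering the covering‑code construction of \Tref{thm:cov} on top, one subscheme at a time, for the access complexity. The choice to run every subscheme on either $\lceil Np/q\rceil$ or $\lfloor Np/q\rfloor$ servers is forced by the requirement of an asymptotically optimal rate under the per‑server PIR‑storage budget $\tfrac pq$, as in~\cite{AKT18}.

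\textbf{Layout and storage.} First I would write $L=t\ell$ and cut each file $\bfx^m$ into $t$ parts of length $\ell$. Put $d_1=\lceil Np/q\rceil-1$, $d_2=\lfloor Np/q\rfloor-1$, and let $\eta$ be the unique number with $\eta(d_1+1)+(1-\eta)(d_2+1)=Np/q$ (rational, since $p/q$ is). Consistently across files, assign $\eta t$ of each file's parts to ``large'' subdatabases, each collecting $d_1$ parts per file, and the remaining $(1-\eta)t$ parts to ``small'' subdatabases, each collecting $d_2$ parts per file; this produces $S_1=\eta t/d_1$ large and $S_2=(1-\eta)t/d_2$ small subdatabases. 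I would run a B‑E‑P subscheme on each large subdatabase over a $(d_1+1)$‑subset of the $N$ servers and on each small subdatabase over a $(d_2+1)$‑subset, choosing the subsets by a balanced (e.g.\ cyclic) rule so that every server lies in exactly $S_1(d_1+1)/N$ large and $S_2(d_2+1)/N$ small subschemes. A direct count then gives the PIR storage on each server as $\tfrac{M\ell t}{N}\bigl(\eta(d_1+1)+(1-\eta)(d_2+1)\bigr)=\tfrac pq ML$, i.e.\ the earmarked amount; making the counts literally integral only requires $t$ to be sufficiently divisible, which is harmless for large $L$ and is the routine part of the memory‑sharing argument underlying~\cite{AKT18}.

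\textbf{Privacy and rate.} Each B‑E‑P subscheme is private~\cite{BEP17}; the random $M$‑tuples are drawn independently across subschemes and a server sees one independent B‑E‑P query per subscheme it joins, so no single server learns anything about $f$, even though $f$ is common to all subschemes. For the rate, a B‑E‑P subscheme on $e$ servers has the user download one length‑$\ell$ string from each of its $e$ servers and recover the $e-1$ parts of $\bfx^f$ lying in its subdatabase; summing over all subschemes, the total download equals $S_1(d_1+1)\ell+S_2(d_2+1)\ell=L\bigl(\eta\tfrac{\lceil Np/q\rceil}{\lceil Np/q\rceil-1}+(1-\eta)\tfrac{\lfloor Np/q\rfloor}{\lfloor Np/q\rfloor-1}\bigr)$, while all $t$ parts of $\bfx^f$ are recovered, so dividing yields $\Omega$.

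\textbf{Access complexity.} Set $\beta=\epsilon q/p\ge1$. On each server I would apply \Tref{thm:cov} \emph{separately inside each subscheme}: replace the $Md$ original substrings of that subscheme's local copy (with $d\in\{d_1,d_2\}$) by $\beta Md$ coded substrings. Summed over the subschemes through a server, the number of coded substrings is $\beta\cdot\tfrac pq Mt$, i.e.\ $\epsilon ML$ bits, using precisely the remaining $(\epsilon-\tfrac pq)ML$ bits. Each subscheme sends that server exactly one query, a linear combination of its $Md$ substrings, which by \Tref{thm:cov} is answered with at most $f(\beta)Md$ accesses; summing over all subschemes through the server gives at most $f(\beta)\cdot\tfrac pq Mt$ substrings $=f(\epsilon\tfrac qp)\tfrac pq ML$ bits, hence $\delta_n=f(\epsilon\tfrac qp)\tfrac pq$ and $\Delta=\sum_{n=1}^N\delta_n=\tfrac{Np}{q}f(\epsilon\tfrac qp)$.

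\textbf{Main obstacle.} The delicate point is combining the layout bookkeeping with the \emph{block‑by‑block} placement of the covering code. Applying a single covering code to a server's entire PIR block would force a server belonging to $s$ subschemes to perform up to $s$ times the single‑query cost, and $s\to\infty$ with $t$, collapsing the bound to the trivial $\Delta=\tfrac{Np}{q}$; encoding each subscheme's block on its own keeps the total per‑server access at $f(\beta)\cdot\tfrac pq Mt$ substrings and makes $\Delta$ independent of $t$. One also has to ensure $\lfloor Np/q\rfloor\ge2$ so that the denominators $\lfloor Np/q\rfloor-1$ are positive and the small subschemes are genuine multi‑server schemes; the borderline range $p/q$ near $1/N$ would have to be excluded or treated separately.
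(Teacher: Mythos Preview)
Your proposal is correct and follows the same route as the paper: memory sharing of parallel B\textendash E\textendash P subschemes on $\lceil Np/q\rceil$ or $\lfloor Np/q\rfloor$ servers to get the rate $\Omega$, and then the covering-code mechanism of \Tref{thm:cov} on the remaining $(\epsilon-\tfrac pq)ML$ bits per server to get $\Delta=\tfrac{Np}{q}f(\epsilon\tfrac qp)$. The paper's argument is the discussion immediately preceding the theorem; it is terse and simply says the leftover storage is used ``via the covering code approach aforementioned,'' without spelling out that the covering code should be applied block-by-block, one subscheme at a time. Your observation that a single global covering code per server would incur one covering-radius cost per query and hence per subscheme, degrading to the trivial $\Delta=\tfrac{Np}{q}$ as the number of subschemes grows, is a point the paper glosses over; your per-subscheme encoding is exactly what is needed to make the stated $\Delta$ come out. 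Your flag on the range $\tfrac1N\le\tfrac pq<\tfrac2N$, where $\lfloor Np/q\rfloor-1$ may vanish, is also a genuine boundary issue that the paper does not address.
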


\section{Conclusion}\label{sec:concl}

In this paper we took into consideration the access complexity of a PIR scheme. PIR schemes with low access complexity reduce the amount of data to be accessed throughout a PIR scheme and are therefore suitable for practical use.
A few methods were considered, especially ones which use covering codes.
Some of these codes were applied on known schemes. It should be noted that these methods
are not useful for all the known schemes, e.g. the one of Sun and Jafar~\cite{SJ17B}.
Finally, the problem of generalized coset weights, which will be helpful when there are multiple
queries on each server, has independent interest in coding theory.

\vspace{-1ex}
\section*{Acknowledgment}
E. Yaakobi and Y. Zhang were supported in part by the ISF grant 1817/18.
T. Etzion and Y. Zhang were supported in part by the BSF-NSF grant 2016692.
Y. Zhang was also supported in part by a Technion Fellowship.
\vspace{-1ex}

\end{document}